\newcommand{\macro}[2]{ \providecommand{#1}{{\ensuremath{#2}}\xspace}}
\macro{\N}{\mathbb N}
\macro{\noir}{\bullet}
\macro{\blanc}{\circ}
\macro{\lnoir}{\blanc\!\!\rightarrow\!\!\noir}
\macro{\lblanc}{\blanc\!\!\leftarrow\!\!\noir}
\macro{\lall}{\blanc\;\;\;\noir}
\macro{\lok}{\blanc\!\!\leftrightarrows\!\!\noir}
\macro{\proc}{x}%
\macro{\scheme}{\mathcal S}
\macro{\algo}{\mathcal A}
\macro{\prob}{\mathcal P}
\begin{document}
\title{Consensus vs Broadcast in Communication Networks 
with Arbitrary Mobile Omission Faults}
\author{Emmanuel Godard$^{1,2}$ and Joseph Peters$^2$%
\thanks{Research supported by NSERC of Canada}%
}
\institute{$^1$ Pacific Institute for Mathematical Sciences, CNRS UMI 3069\\
$^2$ School of Computing Science, Simon Fraser University 
}
\date{} 
\pagestyle{plain}
\maketitle

\begin{abstract}
We compare the solvability of the Consensus and Broadcast problems in
synchronous communication networks in which the delivery of messages is
not reliable. The failure model is the mobile omission faults model.
During each round, some messages can be lost and the set of
possible simultaneous losses is the same for each round. 
We investigate these problems for the first time for arbitrary sets of
possible failures. Previously, these sets were
defined by bounding the numbers of failures.
In this setting, we present a new necessary condition for the solvability
of Consensus that unifies previous impossibility results in this
area. This condition is expressed using Broadcastability properties.
As a very important application, we show that when the
sets of omissions that can occur are defined by bounding the
numbers of failures, counted in \emph{any} way (locally, globally, etc.), 
then the Consensus problem is actually equivalent to the Broadcast
problem.
\end{abstract}

\setcounter{page}{1}
\setcounter{footnote}{0}

\section{Introduction}

We consider synchronous communication networks in which some messages can
be lost during each round. These omission faults can be permanent or not;
a faulty link can become reliable again after an unpredictable number of
rounds, and it can continue to alternate between being reliable and faulty
in an unpredictable way.
This model is more general than other models, such as \emph{component
failure} models, in which failures, once they appear somewhere, are
located there permanently.
The model that we use, called the \emph{mobile faults} or \emph{dynamic faults}
model, was introduced in \cite{timeisnotahealer} and is discussed further
in \cite{SW07}. 
An important property of these systems is that the set of possible
simultaneous omissions is the same for each round. In some sense, the
system has no ``memory'' of the previous failures. 
Real systems often exhibit such memory-less behaviour.
 
In previous research, the sets of possible simultaneous omissions were
defined by bounding the numbers of omissions. Recent work on this subject 
includes \cite{SW07}, in which omissions are counted globally, and 
\cite{SWK09}, in which the number of omissions is locally
bounded. It has also been shown to be good for \emph{layered analysis}
\cite{layering}. 
In this paper we consider the \emph{most general case} of such systems, i.e.
systems in which the set of possible simultaneous omissions is arbitrary. 
This allows the modelling of any system in which omissions can happen
transiently, in any arbitrary pattern, including systems in which the
communications are not symmetric.    
 
We investigate two fundamental problems of Distributed Computing in
these networks: the Consensus problem and the Broadcast
problem. While it has long been known that solvability of the
Broadcast problem implies solvability of the Consensus problem, we
prove here that these problems are actually equivalent (from both the
solvability and complexity points of view) when the sets of possible
omissions are defined by bounding the number of failures, for \emph{any}
possible way of counting them (locally, globally,
any combination, etc.).

\subsection{The Consensus Problem}
The Consensus problem is a very well studied problem in the area of
Distributed Algorithms. It is defined as follows. Each node of the
network starts with an initial value, and all nodes of the network
have to agree on a common value, which is one of the initial values.
Many versions of the problem concern the design of algorithms for
systems that are unreliable.

The Consensus problem has been widely studied in the context of
shared memory systems and message passing systems in which any node
can communicate with any other node.
Surprisingly, there have been few studies in the context of
communication networks, where the communication graph is not a
complete graph. In one of the first thorough studies 
\cite{SW07}, Santoro and Widmayer investigate some
$k-$Majority Problems that are defined as follows.
Each node starts with an initial value, and every node has to compute
a final \emph{decided} 
value such that there exists a value (from the set of initial values)
that is decided by at least $k$ of the nodes. The Consensus problem
(called the Unanimity problem in \cite{SW07}\footnote{%
Note that some of the terminology that we use in this paper is
different from the terminology of Santoro and Widmayer. We are
investigating relationships between different areas of distributed
algorithms, and some terminology (such as $k-$agreement) has
different meanings in the different areas.}) is
the $n-$Majority problem where $n$ is the number of nodes in the
network. 

In their paper, Santoro and Widmayer give results about
solving the Consensus problem in communication networks with various
types of faults including omission
faults. For simplicity, we focus here only on omission faults. 
We believe that our
results can be quite easily extended to other fault models,
using the methodology of \cite{SW07}.

\subsection{The Broadcast Problem}
Two of the most widely studied patterns of information propagation in
communication networks are \emph{broadcasting} and \emph{gossiping}. 
A \emph{broadcast} is the distribution of an initial value from one
node of a network to every other node of the network. A \emph{gossip}
is a simultaneous broadcast from every node of the network. The Broadcast
problem that we study in this paper is to find a node from which a
broadcast can be successfully completed.

There are close relationships between broadcasting and gossiping, and
the Consensus problem.
Indeed, the Consensus problem can be solved by first gossiping and
then applying a deterministic function at each node to the set of
initial values. But a gossip is not actually necessary.
If there exists a distinguished node $v_0$ in the network, then a
Consensus algorithm can be easily derived from an algorithm that
broadcasts from $v_0$. However the Broadcast problem and the Consensus
problem are not equivalent, as will be made clearer in
Section~\ref{theproblems}.

\subsection{Our Contributions}

In this paper, we investigate systems in which
the pattern of omission failures is arbitrary. 
A set of simultaneous omissions is called a \emph{communication event}.
We characterize the solvability of the Broadcast and Consensus problems
subject to an arbitrary family of possible communication
events. A node from
which it is possible to broadcast if the system is restricted to a given
communication event is called a \emph{source} for the communication event.
We prove that the Broadcast problem is solvable if
and only if there exists a common source for all communication events.
To study the Consensus problem, we define an 
equivalence relation on a family of communication events 
based on the collective local observations of the events by the
sources.  
We prove in Theorem~\ref{impossibility} that
the Consensus problem is not solvable for a family of
communication events 
if the Broadcast problem is not solvable for one class of the
equivalence relation. 
\emph{For Consensus to be solvable, the sources of a given event must be
  able to collectively distinguish communication events with
  incompatible sources.}  
We conjecture that this is actually a sufficient condition.

It is very
simple to characterize Broadcastability (see
Theorem~\ref{broadcastability}), so we get very simple and efficient
impossibility proofs for solving Consensus subject to arbitrary
omission failures.
These impossibility conditions are satisfied by the omission schemes of
\cite{SW07} and \cite{SWK09}. 
This means that our results encompass all previous known
results in the area. 

Furthermore, we prove that under very general conditions, in
particular when the possible simultaneous omissions are defined  
by bounding the number of omissions, for any way of counting omissions,
there is actually only one equivalence class when the
system is not broadcastable. 
An important application is that,
\emph{the Consensus problem is exactly the same as the Broadcast problem
for most omission fault models}. 
Therefore, it is possible to deduce complexity
results for the Consensus problem from complexity results about
broadcasting with omissions. 

\subsection{Related Work}

In \cite{HO}, %
the authors present a model that can
describe benign faults. This model is called the ``Heard-Of'' model.
It is a round-based model for an omission-prone environment in which the
set of possible communication events is not necessarily the same for
each round. 
However, they require a time-invariance property.
As a special case, they present a characterization that shows
that solving Consensus in this environment is equivalent to solving a
Selection problem. They also present algorithms for some %
families of omission schemes. It is not possible to derive our
simple characterizations from \cite{HO}.

In \cite{SW07}, the Consensus and related Agreement problems are
investigated for networks in which there are at most $f$ omissions
during any given round. 
It is proved that it is impossible to solve Consensus if
$f$ is at least the minimum degree of the graph. 
A Consensus algorithm is presented for the case where $f$ is strictly
smaller than the connectivity of the network.
In \cite{2generals}, a reduction to the two
process case is used to show that the connectivity of the graph is indeed the
exact limit for consensus is such omission schemes. In this paper,
we generalize
these results, showing that exact limits for Consensus can be
derived from exact limits for Broadcast. 
 
While the above studies use a global failure metric, a local failure
metric is investigated in \cite{SWK09}, distinguishing send and
receive omissions. 
The authors describe which bounds allow
Consensus to be solved, using a proof technique based on a \emph{Withholding
  Lemma}.
They claim that Consensus is solvable if and only if no node
can withhold its information from some other part of the network. 
We will show that this is not true when the pattern of omissions can be
arbitrary. 
We  present (in Example~\ref{contrex}) a system
in which nodes can withhold information infinitely but Consensus is solvable. 

Finally, it is worth noting that, although both \cite{SW07} and
\cite{SWK09} are using the classic bivalency proof technique, it
is not possible to derive any of the results of
\cite{SW07} or \cite{2generals} (global bound on omissions)
from the results of \cite{SWK09} (local bound) as
the omission schemes are not comparable. 
Our results consolidate these previous results.
Furthermore, our approach is more general than these previous approaches
and is more suitable for applications to new omission metrics.

\section{Definitions and Notation}
\subsection{Communication Networks}
We model a communication network by a digraph
$G=(V,E)$ which does not have to be symmetric.
If we are given an undirected graph $G$, we consider the corresponding
symmetric digraph.
We always assume that nodes have unique identities.
Given a set of arcs $E$, we define $h(E)=\{t\mid (s,t)\in E\}$,
the set of nodes that are heads of arcs in $E$.

All sub-digraphs that we consider in this paper are spanning subgraphs. Since
all spanning subgraphs of a digraph have the same set of nodes, we will
use the same notation to refer to both the set of arcs of a sub-digraph
and the sub-digraph with that set of arcs when the set of nodes is not
ambiguous.

\subsection{Omission Schemes}
In this section, we introduce our model and the associated notation. 
Communication in our model is synchronous but not reliable, and communication
is performed in rounds.    
Communication with omission faults is described by a spanning sub-graph 
of ${G}$ with the semantics that are specified in Section~\ref{execution}. 

Throughout this paper, the underlying graph $G=(V,E)$ is fixed, and we
define the set $\Sigma=\{(V,{E'}) \mid {E'}\subseteq{E}\}.$
This set represents all possible simultaneous communications given the
underlying graph $G$.

\begin{definition}
  An element of $\Sigma$ is called a \emph{communication event} (or
  \emph{event} for short). 
  An \emph{omission scenario} ( or \emph{scenario} for short) is an
  infinite sequence of communication events.  
  An \emph{omission scheme} over $G$ is a set of omission scenarios.
\end{definition}

A natural way to describe communications is to consider $\Sigma$ to be
an alphabet, with communication events as letters of the alphabet,
and scenarios as infinite words.
We will use standard
concatenation notation when describing sequences.
If $w$ and $w'$
are two sequences, then $ww'$ is the sequence that starts with the
ordered sequence of events
$w$ followed by the ordered sequence of events
$w'$. This notation is extended to sets in an obvious way. 
The empty word is denoted $\varepsilon$.
We will use the following standard notation to describe
our communication schemes.
\begin{definition}[\cite{PPinfinite}] 
  Given $R\subset\Sigma$, $R^*$ is the set of all finite
  sequences of elements of $R$, and $R^\omega$ is the set
  of all infinite ones.%
\end{definition}

The set of all possible scenarios on $G$ is then $\Sigma^\omega$.
A given word $w\in\Sigma^*$ is called a \emph{partial scenario} and 
$|w|$ is the length of this partial scenario. 
An {omission scheme} is then a subset $\scheme$ of $\Sigma^\omega$.
A \emph{mobile omission scheme} is a scheme that is equal to
$R^\omega$ for some subset $R\subseteq\Sigma$. 
In this paper, we consider only mobile omission schemes.
Note that we do not require $G$ to belong to $R$.
A formal definition of an execution subject to a scenario will be
given in Section~\ref{execution}.  
Intuitively, the $r$-th letter of a scenario will describe which
communications are reliable during round $r$.
\medskip

Finally, we recall some standard definitions for infinite words and
languages over an alphabet $\Sigma$.
Given $w=(a_1,a_2,\ldots)\in \Sigma^\omega$, a \emph{subword} of $w$ is a
(possibly infinite) sub-sequence $(a_{\sigma(1)},a_{\sigma(1)},\ldots)$,
where $\sigma$ is a strictly increasing function. 
A word $u\in\Sigma^*$ is a prefix of $w\in\Sigma^*$
(resp.\ $w'\in\Sigma^\omega$) if there exists $v\in\Sigma^*$
(resp.\ $v'\in\Sigma^\omega$) such that $w=uv$ (resp.\ $w'=uv'$).
Given $w\in\Sigma^\omega$ and $r\in\N$, $w|_r$ is the finite prefix of $w$
of length $r$.

\begin{definition}
  Let $w\in\Sigma^\omega$ and $L\subset\Sigma^\omega$. Then
  $\mathit{Pref}(w)=\{u\in\Sigma^*|u \mbox{ is a prefix}$ $\mbox{of } w\}$, and
  $\mathit{Pref}(L)=\mathop\bigcup\limits_{w\in L}\mathit{Pref}(w)$.
  A word $w'$ is an \emph{extension} of $w$ in $L$, if $ww'\in L$.
\end{definition}

\subsection{Examples}
  We do not restrict our study to regular sets, however all omission
  schemes known to us are regular, including the
  following examples, so we will use the notation for regular sets.
  We present examples for systems with two processes but they can be
  easily extended to any arbitrary graph.
  The set
  $\Sigma=\{\lok,\lblanc,\lnoir,\lall\}$
  is the set of directed graphs with two nodes \blanc and \noir.
  The subgraphs in $\Sigma$ describe what can happen during a given round
  with the following interpretation:

  \begin{compactitem}
  \item
    $\lok:$ all messages that are sent are correctly received;
  \item
    $\lblanc:$ the message from process \blanc, if any, is not received;
  \item
    $\lnoir:$ the message from process \noir, if any, is not received;
  \item
    $\lall:$ no messages are received.

  \end{compactitem}

  \begin{example}\label{2processes}
    The set $\{\lok\}^\omega$ corresponds to a reliable system. The set
    $\mathcal O_1 = \{\lok,\lblanc,\lnoir\}^\omega$ is well studied 
    and corresponds to the 
    situation in which there is at most one omission per round.
  \end{example}

\begin{example}\label{contrex}
  The set $\mathcal H=\{\lblanc,\lnoir\}^\omega$ describes a system in
  which at most one
  message can be successfully received in any round, 
  and if only one message is sent, it might not be received.
\end{example}

The examples above are examples of mobile omission schemes. The
following is a typical example of a non-mobile omission scheme.

\begin{example}\label{ex:crash}
  Consider a system in which at most one of the processes can crash.
  From the communications point of view, this is equivalent to a system
  in which it is possible that no messages are transmitted by one of
  the processes after some arbitrary round.
The associated omission scheme is the following:
  $$\mathcal C_1=\{\lok^\omega\}\cup\{\lok\}^*(\{\lblanc^\omega\}\cup 
  \{\lnoir^\omega\}).$$
\end{example}

\subsection{Reliable Execution of a Distributed Algorithm Subject to Omissions}
\label{execution}

Given an omission scheme $\scheme$, we define  what is a successful
execution of
a given algorithm \algo with a given initial configuration $\iota$. 
Every process can execute the following communication primitives:
\begin{compactitem}
\item $send(v,msg)$ to send a message $msg$ to an out-neighbour $v$,
\item $recv(v)$ to receive a message from an in-neighbour $v$.
\end{compactitem}

An \emph{execution}, or \emph{run}, of an algorithm  \algo \emph{subject to}
scenario $w\in \scheme$  is the following. Consider process $u$ and
one of its out-neighbours $v$.
During round $r\in\N$, a message $msg$ is sent from $u$ to $v$,
according to algorithm \algo.  
The corresponding $recv(v)$ will return
$msg$ only if $E'$, the $r$-th letter of $w$, is such that $(u,v)\in
E'$. Otherwise the returned value is $null$.
All messages sent in a round can only be received in the same round.
After sending and receiving messages,
all processes update their states according to \algo and the messages
they received. 
Given $u\in \mathit{Pref}(w)$, let $s^\proc(u)$ denote the
state of process \proc at the end of the $|u|$-th round of algorithm \algo
subject to scenario $w$.
The initial state of \proc is $\iota(x)=s^\proc(\varepsilon)$.
A \emph{configuration} corresponds to the collection of local states at
the end of a
given round. 
An \emph{execution} of \algo subject to $w$ is the (possibly infinite)
sequence of such  message exchanges and corresponding configurations. 

\begin{remark}
  With this definition of execution, the environment is independent of the
  actual behaviour of the algorithm, so communication failures do
  not depend upon whether or not messages are sent.
  This model is not suitable for
  modelling omissions caused by congestion. See
  \cite{krlovi_broadcasting_2003} for examples of threshold-based
  omission models.   
\end{remark}

\begin{definition}
A algorithm \algo \emph{solves a problem} \prob \emph{subject to
  omission scheme $\scheme$} with initial configuration $\iota$,
if, for any scenario $w\in \scheme$, there exists $u\in \mathit{Pref}(w)$ such
that the state $s^x(u)$ of each process $x\in V$ satisfies the
specifications of \prob for initial configuration $\iota$. 
In such a case, \algo is said to be \emph{\scheme-reliable} for \prob. 
\end{definition}

\begin{definition}
  If there exists an algorithm that solves a problem \prob subject to
  omission scheme \scheme, then we say that \prob is
  $\scheme-$solvable. 
\end{definition}

\begin{remark}
  We emphasize that for an algorithm, ``knowing'' the omission
  scheme against which it runs is not the same as knowing whether or not a
  given message is actually received.   
\end{remark}

\section{The Problems}
\label{theproblems}

\subsection{The Binary Consensus Problem}
\label{defconsensus}
A set of synchronous processes wishes to agree about a binary
value. This problem was first identified and formalized by Lamport,
Shostak and Pease \cite{LSP}. Given a set of processes, a consensus
protocol must satisfy the following 
properties for any combination of initial values~\cite{LynchDA}:

\begin{compactitem}
\item \emph{Termination}: every process decides some value;
\item \emph{Validity}: if all processes initially propose the same value $v$, 
then every process decides $v$;
\item \emph{Agreement}: if a process decides $v$, then every process decides
 $v$.
\end{compactitem}

Consensus with these termination and decision requirements
is more precisely referred to as \emph{Uniform Consensus}
(see \cite{RaynalSynchCons} for a discussion).
Given a fault environment, the natural questions are: is
Consensus solvable, and if it is solvable, what is the minimum
number of rounds to solve it?  

\subsection{The Broadcast Problem}
\label{defbroadcast}
  Let $G=(V,E)$ be a graph. There is a \emph{broadcast algorithm
    from} $u\in V$, if there exists an algorithm that can successfully
  transmit any value stored in $u$ to all nodes of $G$.
   The \emph{Broadcast problem on  graph
   $G$} is to find a $u\in V$ and an algorithm \algo such that \algo
   is a broadcast algorithm from $u$.  
   Given an omission scheme \scheme on $G$, $G$ is
   \scheme-broadcastable if there exists a $u\in V$ such that there is an
   \scheme-reliable broadcast algorithm \algo from $u$. 

\subsection{First Reduction}
\label{1streduc}
The next proposition is quite well known but leads to very interesting
questions.  
\begin{proposition}
  Let $G$ be a graph and \scheme an omission scheme for $G$. If $G$ is
  \scheme-broadcastable, then Consensus is \scheme-solvable on $G$.
\end{proposition}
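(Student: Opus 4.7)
The plan is to build a Consensus algorithm on top of the assumed broadcast algorithm, using the distinguished broadcast source as a dictator whose input becomes the common decision value. Since $G$ is $\scheme$-broadcastable, by definition there exist a node $v_0 \in V$ and an $\scheme$-reliable broadcast algorithm $\algo_B$ from $v_0$. I would define a Consensus algorithm $\algo_C$ in which $v_0$ invokes $\algo_B$ on its own initial value, every other node participates in the broadcast protocol, and each node (including $v_0$) decides on the value delivered by $\algo_B$.

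To check that $\algo_C$ solves Uniform Consensus, I would verify the three clauses from Section~\ref{defconsensus}. Termination reduces directly to the $\scheme$-reliability of $\algo_B$: for every scenario $w \in \scheme$ there exists a prefix $u \in \mathit{Pref}(w)$ after which every process has received $v_0$'s value and can decide. Agreement is immediate because every process decides on the same value, namely the one initially held by $v_0$. Validity follows from the observation that if every process starts with the same value $v$, then in particular $v_0$ starts with $v$, so $v$ is precisely the value that gets broadcast and decided upon.

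The main \emph{obstacle} here is that there really is none: the reduction is entirely standard, and the proposition is included mainly as motivation for the converse question, which is the genuinely interesting content of the paper. The only subtlety worth flagging is that $\algo_C$ uses $v_0$'s identity as a hard-coded constant in the algorithm, which is available because unique identifiers on $V$ have been assumed throughout; without such identifiers, an additional symmetry-breaking step would be needed before the broadcast phase.
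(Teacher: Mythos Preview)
Your proof is correct and follows exactly the same approach as the paper: use the guaranteed broadcast source as a dictator, broadcast its initial value with the $\scheme$-reliable broadcast algorithm, and have every node decide that value. The paper's proof is even terser than yours but the argument is identical.
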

\begin{proof}
  If $G$ is \scheme-broadcastable then there exists a node $u$ and 
  an algorithm \algo for broadcasting any value from $u$ subject to
  $\scheme$. 
  The consensus algorithm uses \algo to broadcast the initial value of
  $u$, and then every node decides the value received from $u$.
  As the value that $u$ broadcasts is one of the initial values,
  the algorithm satisfies all three conditions, and is obviously
  \scheme-reliable. 
\hfill$\qed$\end{proof}

We now present an example that shows that the converse is not always true.

\begin{example}\label{contrex-algo}
  The omission scheme $\mathcal H = \{\lblanc,\lnoir\}^\omega$ of
  Example~\ref{contrex} is an example of 
  a system for which there is a Consensus algorithm but no
  Broadcast algorithm.

  It is easy to see that it is not possible to broadcast from \blanc
  (resp.\ \noir) subject to $\mathcal H$ because $\lblanc^\omega$ 
  (resp.\ $\lnoir^\omega$) is a possible scenario. 
  However, the following one-round algorithm (the same for both processes)
  is an $\mathcal H-$reliable Consensus algorithm:
  \begin{compactitem}
  \item send the initial value;
  \item if a value is received, decide this value, otherwise decide the 
    initial value.
  \end{compactitem}
This algorithm is correct, as exactly one process will receive a value,  
but it is not possible to know in advance whose value will be received.
\end{example}

We propose to study the following question: when is the solvability of
Consensus equivalent to the solvability of Broadcast? That is, given a
graph $G$, what are the mobile omission schemes \scheme on $G$ such that
Consensus is \scheme-solvable and
$G$ is \scheme-broadcastable.  
In the process of answering this question, we will give a simple
characterization of the solvability of Broadcast and a necessary
condition for the solvability of Consensus
subject to mobile omission schemes.

\section{Broadcastability}

\subsection{Flooding Algorithms}
We start with a basic definition and lemma.  
\begin{definition}
  Consider a sub-digraph $H$ of $G$ and a node $u\in V$.
  A node $v\in V$ is \emph{reachable from $u$ in $H$} if 
  there is a directed path from $u$ to $v$ in $H$.
  Node $u$ is a \emph{source} for $H$ if every $v\in V$ is
  reachable from $u$ in $H$.
\end{definition}

In a \emph{flooding algorithm}, one node repeatedly sends a message
to its neighbours, and each other node repeatedly forwards any
message that it receives to its neighbours. The following
useful lemma (from folklore) about synchronous flooding algorithms
is easily extended to the
omission context. Let $\mathcal F_u^r$ denote a flooding algorithm
that is originated by $u\in V$ and that halts after $r$ synchronous
rounds.  

\begin{lemma}\label{obvious}
  A node $u\in V$ is a \emph{source} for $H$ if and only if for all $r\geq|V|$, 
  $\mathcal F_u^r$ is $H^\omega -$reliable for the Broadcast problem. 
\end{lemma}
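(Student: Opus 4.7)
The plan is to prove each direction by a straightforward induction on round number, using the fact that under the scenario $H^\omega$ every round delivers exactly the arcs of $H$, so messages propagate along (and only along) directed paths of $H$.

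For the forward direction, assume $u$ is a source for $H$. Since $H$ has $|V|$ vertices, any vertex reachable from $u$ in $H$ is reachable via a directed path of length at most $|V|-1$. I would prove by induction on $k$ that, in the run of $\mathcal F_u^r$ subject to any $w\in H^\omega$, every node at distance $\leq k$ from $u$ in $H$ holds the broadcast value at the end of round $k$. The base case $k=0$ is immediate since $u$ starts with the value. For the inductive step, if $v$ is at distance $k+1$ from $u$, pick a predecessor $v'$ at distance $k$ on some shortest path; the inductive hypothesis gives $v'$ the value by round $k$, and in round $k+1$ the arc $(v',v)$ is present in the $(k+1)$-st letter of $w$ (which equals $H$), so $v$ receives the value. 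For $r\geq|V|>|V|-1$, all reachable nodes, and hence all nodes, have the value when $\mathcal F_u^r$ halts, so $\mathcal F_u^r$ is $H^\omega$-reliable.

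For the reverse direction, I would argue the contrapositive. Suppose $u$ is not a source for $H$, so there is some $v\in V$ that is not reachable from $u$ in $H$. Let $R\subseteq V$ denote the set of nodes reachable from $u$ in $H$; by assumption $v\notin R$. A simple induction on the round number, using that all delivered messages travel along arcs of $H$, shows that only nodes in $R$ can ever hold (or transmit further) the value originally held by $u$: the set of nodes that have received the value at the end of round $k$ stays within $R$ because any successful message in round $k+1$ traverses an arc of $H$ out of a node in $R$, whose head lies in $R$ by definition. Hence, subject to the scenario $H^\omega\in H^\omega$, node $v$ never receives the value no matter how large $r$ is, so $\mathcal F_u^r$ fails to be $H^\omega$-reliable for Broadcast.

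Neither direction should present any real obstacle; the lemma is essentially folklore, and the only care needed is to keep straight the distinction between the underlying graph $G$ and the sub-digraph $H$ determining the (constant) communication event used in every round of $H^\omega$, and to note that the flooding horizon $r\geq|V|$ dominates the graph diameter.
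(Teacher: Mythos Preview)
Your proof is correct; both directions are handled by the natural induction on round number, and the only delicate points --- that reachability in $H$ gives paths of length at most $|V|-1$, and that the set of informed nodes under $H^\omega$ is closed under taking $H$-successors --- are stated and used properly. The paper itself does not give a proof of this lemma, labelling it folklore, so your argument is exactly the standard one the authors are implicitly invoking.
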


\subsection{Characterizations of Broadcastability with Arbitrary Omissions}
\label{sec:broadcastability}

We have the following obvious but fundamental lemma. We say that a node
is \emph{informed} if it
has received the value from the originator of a broadcast.
\begin{lemma}\label{subwordflood}
  Let $u\in V, r\in\N$, and let $\mathit{Inform}(w)$ be the set of nodes
  informed by
  $\mathcal F_u^r$ under the partial execution subject to $w\in\Sigma^*.$
  Then for any subword $w'$ of $w$, $\mathit{Inform}(w')\subseteq \mathit{Inform}(w).$   
\end{lemma}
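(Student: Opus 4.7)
The plan is to prove monotonicity of flooding by showing that inserting an extra communication event into a scenario can only enlarge the informed set, and then to iterate this single-insertion fact, since any subword is what remains after a sequence of deletions.

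First I would make the round-by-round dynamics explicit. Writing $I_t(w)$ for the set of nodes informed after the first $t$ rounds of $\mathcal F_u^r$ subject to $w\in\Sigma^*$, the update rule is $I_0(w)=\{u\}$ and $I_{t+1}(w)=I_t(w)\cup h\bigl(e_{t+1}\cap(I_t(w)\times V)\bigr)$, where $e_{t+1}$ denotes the $(t+1)$-st event in $w$. Two elementary monotonicity properties fall out directly: (M1) $I_t(w)\subseteq I_{t+1}(w)$, because an informed node never loses its information; and (M2) if $S\subseteq T\subseteq V$ and $e\in\Sigma$, then $S\cup h(e\cap(S\times V))\subseteq T\cup h(e\cap(T\times V))$, because starting a round with more informed nodes ends the round with at least as many.

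Next I would prove the single-insertion claim: for every $w_1,w_2\in\Sigma^*$ and every $e\in\Sigma$, $\mathit{Inform}(w_1 w_2)\subseteq \mathit{Inform}(w_1\, e\, w_2)$. This is a short induction on $|w_2|$: the base case $w_2=\varepsilon$ is precisely (M1) applied after $w_1$, and the inductive step appends one common event to both scenarios and uses (M2) to carry the existing inclusion through one more round. An arbitrary subword $w'$ of $w$ is obtained from $w$ by removing $|w|-|w'|$ events one at a time, or equivalently, $w$ is the result of $|w|-|w'|$ single insertions applied to $w'$; iterating the single-insertion claim then yields $\mathit{Inform}(w')\subseteq \mathit{Inform}(w)$.

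I do not expect a serious technical obstacle; the only care needed is to reconcile the abstract notion of subword with the round indexing, and to note that the halting of $\mathcal F_u^r$ after $r$ rounds is harmless because (M1) guarantees that any informed set seen along the execution of $w$ is contained in every later informed set. The substance of the proof is simply that flooding is monotone in the communication events that occur.
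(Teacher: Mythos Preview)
Your proposal is correct. The paper does not actually supply a proof of this lemma; it is introduced as an ``obvious but fundamental'' fact and left unproved, so there is nothing to compare against beyond noting that your single-insertion reduction, driven by the two monotonicity observations (M1) and (M2), is precisely the routine verification the authors evidently had in mind. One minor remark: your closing comment about the halting bound $r$ is a bit loose, but in the paper's only use of the lemma (Theorem~\ref{broadcastability}) one has $|w|=r$ and $|w'|\le r$, so the truncation never interferes and your argument goes through as written.
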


\begin{theorem}\label{broadcastability}
  Let $G$ be a graph and $R$ a set of communication events for $G$. 
  Then $G$ is $R^\omega-$broad\-castable if and only if there exists $u\in V$ 
  that is a source for all $H\in R$.
\end{theorem}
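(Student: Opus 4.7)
The plan is to prove the two directions separately. For the easy direction ($\Leftarrow$), I will argue that a common source $u$ enables flooding to succeed; for ($\Rightarrow$) I will use a standard information-flow argument against an adversarial scenario of the form $H^\omega$.

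For the ($\Leftarrow$) direction, assume $u\in V$ is a source for every $H\in R$. I claim that $\mathcal F_u^{|V|-1}$ is $R^\omega$-reliable for Broadcast. Fix any scenario $w=H_1H_2\cdots\in R^\omega$ and write $I_t=\mathit{Inform}(H_1\cdots H_t)$, so $I_0=\{u\}$. The key monotonicity statement, which I will prove by induction on $t$, is: either $I_t=V$ or $|I_{t+1}|>|I_t|$. For the inductive step, assume $I_t\neq V$ and pick any $v\notin I_t$. Because $u$ is a source of $H_{t+1}$, there is a directed path from $u$ to $v$ in $H_{t+1}$; walking along this path from $u$, take the first edge $(a,b)$ whose head $b$ is not in $I_t$. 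Then $a\in I_t$ and $(a,b)\in H_{t+1}$, so the flooding rule puts $b$ into $I_{t+1}$. Starting from $|I_0|=1$, this strict growth forces $I_r=V$ for every $r\geq |V|-1$.

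For the ($\Rightarrow$) direction, suppose $G$ is $R^\omega$-broadcastable and let $\algo$ be an $R^\omega$-reliable broadcast algorithm from some node $u\in V$. Assume for contradiction that $u$ is not a source for some $H\in R$, so there exists $v\in V$ not reachable from $u$ in $H$. Consider the constant scenario $H^\omega\in R^\omega$. A straightforward induction on the round number shows that, under $H^\omega$, the state $s^v(u|_t)$ of $v$ at the end of round $t$ is a function only of the initial values of nodes that can reach $v$ in $H$: in each round the only messages delivered to $v$ travel along in-edges in $H$, and recursively the senders' states depend only on such ancestors. Since $u$ is not such an ancestor, $v$'s entire trajectory subject to $H^\omega$ is independent of $u$'s initial value. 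Running $\algo$ on two initial configurations that differ only in $u$'s value then forces $v$ to decide the same value in both runs, contradicting correctness of broadcast from $u$.

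The only delicate step is the inductive growth argument in the ($\Leftarrow$) direction, since we must handle a sequence in which the event $H_{t+1}$ can change arbitrarily from round to round; the trick is that each individual $H_{t+1}$ is a source-graph for $u$, which is exactly enough to expose a cut-edge into $V\setminus I_t$. The rest is routine: the ($\Rightarrow$) direction only uses that $H^\omega\in R^\omega$ (which holds because we restrict to mobile omission schemes) and the usual information-locality property of message-passing models.
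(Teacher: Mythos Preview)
Your proof is correct in both directions. The $(\Rightarrow)$ direction is essentially the paper's argument: the paper simply observes that an $R^\omega$-reliable broadcast algorithm must in particular succeed on the constant scenario $H^\omega$ for each $H\in R$, and then invokes Lemma~\ref{obvious}; your information-locality argument just unpacks that lemma.

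The $(\Leftarrow)$ direction, however, takes a genuinely different route. The paper runs flooding for $|R|\cdot|V|$ rounds and argues by pigeonhole that some fixed event $H\in R$ occurs at least $|V|$ times; since $H^{|V|}$ is then a subword of the scenario, Lemma~\ref{obvious} and the subword-monotonicity Lemma~\ref{subwordflood} finish the job. Your argument is more direct: you show that the informed set strictly grows at \emph{every} round as long as it is not yet all of $V$, by using that $u$ is a source in the \emph{current} event $H_{t+1}$ to exhibit a cut arc from $I_t$ into $V\setminus I_t$. This yields a sharper bound of $|V|-1$ rounds instead of $|R|\cdot|V|$, and the argument does not rely on $R$ being finite. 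The paper's approach is a bit more modular (it factors through the two auxiliary lemmas), but your cut-edge argument is both shorter and quantitatively stronger. One minor slip: in the $(\Rightarrow)$ paragraph you write $s^v(u|_t)$ where you presumably mean $s^v(w|_t)$ with $w=H^\omega$.
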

\begin{proof}
  In the first direction, suppose that we have a broadcast algorithm
  from a given $u$ that is $R^\omega-$reliable. Then an execution
  subject to $H^\omega$ is successful for any $H\in R$,
  so $u$ is a source for $H$ by Lemma~\ref{obvious}.

  In the other direction, choose the flooding
  algorithm $\mathcal F_u^{|R|\times|V|}$ to be the broadcast algorithm
  and consider a scenario $w\in R^{|R|\times|V|}.$ 
  There is an event $H\in R$ that appears at least $|V|$ times in
  $w$, hence $H^{|V|}$ is a subword of $w$. 
  As $u$ is a source for $H$ by Lemma~\ref{obvious},
  $\mathit{Inform}(H^{|V|})=V$.
  By Lemma~\ref{subwordflood}, $\mathit{Inform}(w)=V$, and the
  flooding algorithm is $R^\omega-$reliable. 
  \hfill$\qed$\end{proof}

\begin{definition}
  The \emph{set of sources of an event $H\in\Sigma$} is
  $B(H)=\{u\in V \mid u \mbox{ is a source for } H\}.$
\end{definition}

\begin{definition}
  Let $H_1,\dots,H_q\in\Sigma$. Then the set $\{H_1,\dots,H_q\}$ is
  \emph{source-incom\-pa\-tible} if 
 \begin{compactenum}
 \item 
$\forall 1\leq i\leq q, B(H_i) \neq \emptyset,$
 \item 
$\mathop{\bigcap}\limits_{1\leq i\leq q} B(H_i) = \emptyset$.
 \end{compactenum}
  \label{intersection}
\end{definition}

With these definitions we can restate the Broadcastability theorem
(Theorem~\ref{broadcastability}):

\begin{theorem}\label{equivb}
  Let $G$ be a graph and $R$ a set of communication events for $G$. 
  Then $G$ is
  $R^\omega-$broad\-castable if and only if every event
  in $R$ has a source, and $R$ is not source-incompatible.
\end{theorem}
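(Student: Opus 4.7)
The plan is to derive Theorem~\ref{equivb} directly from Theorem~\ref{broadcastability} by unpacking Definition~\ref{intersection} of source-incompatibility. Since $G=(V,E)$ is finite, $\Sigma$ is finite and hence the set $R\subseteq\Sigma$ is finite, so the intersection in Definition~\ref{intersection} ranges over a finite index set. The whole argument is then a short equivalence between two logical conditions.

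For the forward direction, I would assume that $G$ is $R^\omega$-broadcastable. By Theorem~\ref{broadcastability} there exists $u\in V$ that is a source for every $H\in R$, i.e., $u\in B(H)$ for all $H\in R$. This immediately gives both (a) $B(H)\neq\emptyset$ for every $H\in R$ (so events in $R$ do have sources), and (b) $u\in\bigcap_{H\in R}B(H)$, so the intersection is non-empty. By condition~2 of Definition~\ref{intersection}, $R$ is not source-incompatible.

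For the backward direction, I would assume that every event in $R$ has a source and $R$ is not source-incompatible. Enumerate $R=\{H_1,\dots,H_q\}$. Since each $B(H_i)\neq\emptyset$, condition~1 of Definition~\ref{intersection} is satisfied; hence the only way for $R$ to fail to be source-incompatible is that condition~2 fails, meaning $\bigcap_{1\leq i\leq q}B(H_i)\neq\emptyset$. Pick any $u$ in this intersection. Then $u$ is a source for every $H\in R$, so Theorem~\ref{broadcastability} yields that $G$ is $R^\omega$-broadcastable.

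There is no real obstacle here: the statement is a cosmetic reformulation of Theorem~\ref{broadcastability} in the language introduced by Definition~\ref{intersection}, and all the combinatorial/algorithmic content (the flooding algorithm, the subword lemma, and the pigeonhole argument on the length $|R|\times|V|$) has already been absorbed in the earlier proof. The only subtlety to keep in mind is that ``not source-incompatible'' is the negation of a conjunction, so one must explicitly use the hypothesis that every $B(H_i)$ is non-empty to rule out the trivial way of failing Definition~\ref{intersection}.
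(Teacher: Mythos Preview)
Your proposal is correct and matches the paper's approach exactly: the paper itself gives no separate proof for Theorem~\ref{equivb}, introducing it with the sentence ``With these definitions we can restate the Broadcastability theorem (Theorem~\ref{broadcastability})'', so the intended argument is precisely the logical unpacking of Definition~\ref{intersection} that you carry out. Your remark about needing the hypothesis $B(H_i)\neq\emptyset$ to force the failure of condition~2 (rather than condition~1) in the backward direction is the only point worth making explicit, and you make it.
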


\subsection{A Converse Reduction}

\macro{\dG}{R}
\macro{\source}{B}
Given a subset $X\subset V$ of vertices, and $H$ an event we denote 
$In_X(H) = \{(v,u)\in H \mid u\in X\}$.
Given \dG an omission scheme.
We now define a more precise relation about indistinguishability.

\begin{definition}
  Given three directed graphs $G,H,K\in\dG$,%
  we define the following relation denoted by $G\alpha_{K} H$ if
  $In_{\source(K)}(G) = In_{\source(K)}(H)$.
  The relation $\alpha^*$ is the transitive closure of 
  $\alpha_K$ relations for any $K\in\dG$.
  
  We denote $\beta$ the coarsest equivalence relation included
  in $\alpha^*$ such that for all graphs $G,H$
  \begin{compactdesc}
  \item[(Closure Property)]\label{closure}
$G\beta H \Longrightarrow \exists H_0,\dots,H_q$ and
    $K_1,\dots,K_{q}$ such that
    \begin{compactitem}
    \item $G=H_0,H=H_q$,
    \item $\forall i\geq1, H_i\beta G, K_i\beta G,$
    \item $\forall i\geq0, H_i\alpha_{K_i}H_{i+1}.$
    \end{compactitem}
  \end{compactdesc}
\end{definition}

The relation $\alpha_K$ describes how some communication events are
indistinguishable to the all the nodes of $\source(K)$. 
The relation $\beta$ is well defined as the equality relation satisfies
such a closure property. And for any two relations $R_1$ and $R_2$ that
satisfy the property, we have $R_1\cup R_2$ that satisfies the Closure.

\begin{example}
  In $\mathcal O_1$ from Example~\ref{2processes}, there is only one
  equivalence class. Let's see why. First, the sets of sources to
  consider are:
  \begin{compactitem}
  \item $B(\lblanc) = \{\noir\};$
  \item $B(\lnoir) = \{\blanc\};$
  \item $B(\lok) = \{\blanc,\noir\}.$
  \end{compactitem}
  We have $\lblanc\;\alpha_{\{\blanc\}}\;\lok$ and 
  $\lnoir\;\alpha_{\{\noir\}}\;\lok$. Therefore, all communication events
  are $\beta-$equi\-va\-lent. 
  In Example~\ref{contrex}, $\beta$ has two equivalence classes,
  and every node can distinguish immediately which communication event
  happened.  
\end{example}
  As will be seen later in Section~\ref{equivalence}, the omission
  schemes in \cite{SW07} and \cite{SWK09}, and more generally, all
  schemes that are defined by bounding the number of omissions in some way, 
  have only one $\beta-$class when they are source-incompatible.
Finally,
we can now state the main theorem.
\begin{theorem}\label{impossibility}
  Let $G$ a graph and $R$ a set of communication events for $G$. 
  If Consensus is
  $R^\omega-$sol\-vable then for every $\beta_R-$class $C$, $G$ is
  $C^\omega-$broadcastable.
\end{theorem}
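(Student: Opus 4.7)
We argue the contrapositive: assuming $\mathcal{A}$ is an $R^\omega$-reliable consensus algorithm and that some $\beta_R$-class $C$ is such that $G$ is not $C^\omega$-broadcastable, we derive a contradiction. By Theorem~\ref{equivb}, the failure of $C^\omega$-broadcastability gives one of two cases: either some $H\in C$ has $B(H)=\emptyset$, or every event of $C$ has a nonempty source set but $\bigcap_{H\in C}B(H)=\emptyset$. I will concentrate on the second, substantive case; the first is handled by specialising the same style of argument to two mutually unreachable components that any $H$ with $B(H)=\emptyset$ must contain.

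The main tool is a one-round indistinguishability principle that follows directly from the definition of $\alpha_K$: if $G\,\alpha_K\,H$, then in a round where two runs differ only by playing $G$ in place of $H$ and in which the common senders are in identical states at the start of the round, every node of $B(K)$ receives the same messages and therefore remains in the same state afterwards. The closure property of $\beta$ supplies, for any $G,H\in C$, a chain $G=H_0\,\alpha_{K_1}\,H_1\,\alpha_{K_2}\,\cdots\,\alpha_{K_q}\,H_q=H$ in which every $H_i$ and every $K_i$ belongs to $C\subseteq R$, so the scenarios we assemble from these events are legal against $\mathcal{A}$. The plan is to upgrade the one-round indistinguishability into a full-run indistinguishability between two scenarios $\sigma_G$ and $\sigma_H$ that agree outside a single critical round: around that round we insert long blocks of flooding events $K_j\in C$ whose sources, by Theorem~\ref{broadcastability} and Lemma~\ref{subwordflood}, can re-flood the network and absorb the state drift that the perturbed round introduces at nodes outside $B(K_j)$.

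With full-run indistinguishability in hand, I would close the argument with a bivalency step. Validity forces the decision under the all-zero and all-one initial configurations to be $0$ and $1$ respectively, so some adjacent pair of configurations differing in a single node's value must yield different decisions under some scenario in $C^\omega$. Source-incompatibility guarantees, for any candidate decisive node $u$, an event $H\in C$ with $u\notin B(H)$; by placing the differing node outside every $B(K_i)$ used in the chain, the two runs become indistinguishable to all sources we care about, and Agreement then forces a common decision, contradicting the choice of the pair. The main obstacle, as the construction makes clear, is the padding step: a single round of $\alpha_K$-indistinguishability does not by itself extend to later rounds because nodes outside $B(K)$ can contaminate $B(K)$ in the future, and the in-class witnessing $K_i\,\beta\,G$ provided by the closure property of $\beta$ is precisely what allows the padding scenarios to erase this contamination while remaining inside $C^\omega$.
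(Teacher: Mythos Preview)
Your plan has the right skeleton (contrapositive, source-incompatibility, bivalency, $\alpha_K$-chains inside $C$), but the central ``padding'' step is not sound as stated, and the fix is a structural fact you never invoke.

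You propose to extend the one-round $\alpha_K$-indistinguishability to full runs by inserting long blocks $K_j^m$ and appealing to Theorem~\ref{broadcastability} and Lemma~\ref{subwordflood} so that the sources of $K_j$ ``re-flood the network and absorb the state drift'' at nodes outside $B(K_j)$. But those results only say that a value held by a source reaches every node; they say nothing about erasing the private history of nodes outside $B(K_j)$. After the perturbed round, nodes in $V\setminus B(K_j)$ have genuinely different states in the two scenarios, and no amount of flooding from $B(K_j)$ overwrites that memory: those nodes keep their own state and continue to send messages depending on it. So ``full-run indistinguishability'' is not obtained by your mechanism, and the contradiction with Agreement does not follow.

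What actually makes the extension work is the paper's Lemma~\ref{bstable}: in any event $H$, the set $B(H)$ is closed under in-neighbours, i.e.\ no arc of $H$ enters $B(H)$ from outside. Hence, once the nodes of $B(A_i)$ agree after $wH_i$ and $wH_{i+1}$, they remain in identical states under any further suffix $A_i^k$ (Proposition~\ref{adjacency}); the contamination you worry about simply cannot reach $B(A_i)$ while $A_i$ is being played. The paper then does \emph{not} aim for full-run indistinguishability: it assumes (for contradiction) that every one-step extension in $C$ is univalent, and uses the chain $H=H_0\,\alpha_{B(A_0)}\,H_1\,\cdots\,\alpha_{B(A_q)}\,H_{q+1}=H'$ together with the $B(A_i)$-confinement to show that $wH_i$ and $wH_{i+1}$ have the same valency for each $i$, forcing $wH$ and $wH'$ to be equi-valent --- the standard bivalency contradiction (Lemma~\ref{extcont}). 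The same backward-closure is what drives the initial-bivalence argument (Lemma~\ref{bivinit}): for the flipped node $u$ one picks $H$ with $u\notin B(H)$ and runs $H^\omega$, so $B(H)$ never hears from $u$; your remark about ``placing the differing node outside every $B(K_i)$ used in the chain'' is unnecessary there and confuses the two lemmas. In short, replace the flooding heuristic by Lemma~\ref{bstable}/Proposition~\ref{adjacency}, and drop the attempt to get global indistinguishability in favour of $B(A_i)$-local indistinguishability plus univalence.
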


\section{Proof of Main Theorem}
\label{proof}
\subsection{Events without Sources}
First we consider the cases in which there are events without sources.

\begin{proposition}\label{nonreachable}
  If there is an $H\in R$ that has no source, then Consensus
  is not $R^\omega-$solvable.
\end{proposition}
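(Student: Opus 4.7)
The plan is to derive a contradiction by constructing an initial configuration on which any supposed Consensus algorithm must violate Agreement when executed under the scenario $H^\omega\in R^\omega$.

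First I would convert the hypothesis that $H$ has no source into a concrete structural statement. Consider the condensation of $H$ into strongly connected components: a vertex is a source for $H$ exactly when its SCC is the unique minimal element of the condensation DAG. Since $H$ has no source, the condensation therefore has at least two distinct minimal SCCs, call them $S_1$ and $S_2$. These are disjoint and non-empty, and crucially \emph{no arc of $H$ enters $S_1$ from $V\setminus S_1$} (otherwise the SCC of the outside endpoint would sit strictly above $S_1$, contradicting minimality), and likewise for $S_2$.

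Next I would fix three initial configurations: $\iota_0$ assigning $0$ to every node, $\iota_1$ assigning $1$ to every node, and $\iota$ assigning $0$ to every node of $S_1$, $1$ to every node of $S_2$, and arbitrary values elsewhere. Assuming for contradiction that some algorithm $\algo$ is $R^\omega$-reliable for Consensus, I would execute $\algo$ on each of the three configurations subject to the single scenario $H^\omega\in R^\omega$. By Validity and Termination, the execution from $\iota_0$ has every node decide $0$ within some finite round $r_0$, and the execution from $\iota_1$ has every node decide $1$ within some finite round $r_1$.

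The heart of the argument is a round-by-round indistinguishability induction. For every $r\in\N$, the round-$r$ local state of each node $v\in S_1$ is the same in the execution from $\iota$ as in the execution from $\iota_0$: the base case holds because $S_1$-nodes start with value $0$ in both configurations; the inductive step uses the fact that $v$ updates only from its previous state (identical by induction) and from messages received along its in-arcs that are present in $H$, and all such in-arcs originate inside $S_1$ (by the no-incoming-arc property), so the messages carry identical payloads by the induction hypothesis. Hence at round $r_0$ every node of $S_1$ has decided $0$ under $\iota$. The symmetric argument applied to $S_2$ versus $\iota_1$ shows that at round $r_1$ every node of $S_2$ has decided $1$ under $\iota$. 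Since $S_1$ and $S_2$ are non-empty and disjoint, Agreement is violated, contradicting $R^\omega$-solvability of Consensus.

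The main (and really only) obstacle is the graph-theoretic extraction of the two minimal SCCs together with the ``no incoming arc from outside'' property; the remainder is an almost textbook indistinguishability/bivalency-style obstruction adapted to the mobile omission model of the paper.
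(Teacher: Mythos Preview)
Your proof is correct and follows essentially the same argument as the paper's: extract two disjoint non-empty sets with no incoming arcs in $H$, run the three configurations (all $0$, all $1$, mixed) under $H^\omega$, and derive an Agreement violation via indistinguishability. The only difference is granularity: you spell out the extraction via the condensation into strongly connected components and give an explicit round-by-round induction, whereas the paper simply asserts the existence of the two closed sets $U_0,U_1$ and the indistinguishability without further justification.
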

\begin{proof}
  If $H$ has no source, then there exist two non-overlapping,
  non-empty subsets of nodes $U_0$ and $U_1$ such that there are no
  paths in $H$ from $V\backslash U_i$ to $U_i$, $i=0,1$.

  We consider the three following initial configurations:
  \begin{compactenum}
  \item $I_0$: initial value is $0$ at every node,
  \item $I_1$: initial value is $1$ at every node,
  \item $I$: initial value is $0$ if and only if the node belongs to
    $U_0.$ 
  \end{compactenum}

  Under scenario $H^\omega$, $I$ is not distinguishable from $I_0$
  (resp.\ $I_1$) for $U_0$ (resp.\ $U_1$). So subject to $H^\omega$, the
  algorithm decides $0$ in $U_0$ and $1$ in $U_1$, and this contradicts the
  Agreement property. 
\hfill$\qed$\end{proof}

The proof of the main theorem uses an approach that is similar to the
adjacency and
continuity techniques of \cite{SW07}. 
So, we will first prove these two properties. What should be noted
is that the adjacency and continuity properties are mainly consequences of
the fact that the scheme is a mobile scheme.

\subsection{An Adjacency Property}

\begin{lemma}\label{bstable}
  Let $H$ be a subgraph of $G$, and let $(s,t)\in H$. If $t\in B(H)$ then
  $s\in B(H)$.
\end{lemma}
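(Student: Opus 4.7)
The plan is to unfold the definition of source and use a direct path-extension argument. Recall that $s \in B(H)$ means that every node $v \in V$ is reachable from $s$ via a directed path in $H$. So I need to show that from $s$ one can reach every vertex, using only the hypothesis that $t$ is a source and that the arc $(s,t)$ exists in $H$.

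First, I would fix an arbitrary $v \in V$. Since $t \in B(H)$, there is a directed path in $H$ from $t$ to $v$, say $t = u_0, u_1, \ldots, u_k = v$ with each $(u_i, u_{i+1}) \in H$. Now I prepend the arc $(s,t)$, which lies in $H$ by assumption, to obtain the directed walk $s, t, u_1, \ldots, u_k = v$ in $H$. This exhibits $v$ as reachable from $s$ in $H$. Since $v$ was arbitrary, $s$ is a source for $H$, i.e., $s \in B(H)$.

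The step to watch is merely that the arc $(s,t)$ has the correct orientation to extend paths out of $t$ backwards to start at $s$; this is exactly what the notation $(s,t)$ denotes in the directed setting, so the argument goes through. There is no real obstacle: the lemma is essentially a restatement of the fact that in a digraph, an in-neighbour of a source is itself a source. The mobile-faults machinery (flooding, subwords, etc.) is not needed here; the statement is a purely graph-theoretic fact about $B(\cdot)$, and the rest of the paper will use it as a structural building block (for example, to propagate the source property along arcs when reasoning about adjacent events).
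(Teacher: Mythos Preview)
Your proof is correct; this is precisely the one-line path-extension argument that establishes the lemma. The paper itself states Lemma~\ref{bstable} without proof, treating it as immediate, so your write-up simply fills in the obvious details.
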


\begin{proposition}\label{adjacency}
  Let $H\in R$ and $w,w'\in R^*$ such that $|w|=|w'|$ and $s^p(w)=s^p(w')$
  for all $p\in B(H)$. Then for all $k\in\N$ and all $p\in B(H)$,
  $s^p(wH^k)=s^p(w'H^k)$. 
\end{proposition}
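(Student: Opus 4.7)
The plan is to prove Proposition~\ref{adjacency} by induction on $k$, using Lemma~\ref{bstable} as the key ingredient. The intuition is that $B(H)$ is closed under taking in-neighbours in $H$, so during a round in which the event $H$ occurs, every message delivered to a source of $H$ comes from another source of $H$. Hence the sub-system consisting of the sources of $H$ evolves autonomously during a run of $H^k$, and if their states coincide at the beginning of such a run, they will continue to coincide throughout.

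Formally, I would proceed as follows. The base case $k=0$ is exactly the hypothesis $s^p(w)=s^p(w')$ for all $p\in B(H)$. For the induction step, assume that $s^p(wH^k)=s^p(w'H^k)$ for every $p\in B(H)$, and consider the $(|w|+k+1)$-st round, whose communication event is $H$ in both executions. Fix an arbitrary $p\in B(H)$. The state $s^p(wH^{k+1})$ is determined by the algorithm $\algo$ applied to $s^p(wH^k)$ together with the set of messages that $p$ actually receives in this round. A message can be received by $p$ only along an arc $(q,p)\in H$, and by Lemma~\ref{bstable} every such $q$ belongs to $B(H)$. The message that $q$ sends in this round is determined by $\algo$ and $s^q(wH^k)$, which equals $s^q(w'H^k)$ by the induction hypothesis. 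Thus $p$ sees the same local state and the same incoming messages in both executions, and hence $s^p(wH^{k+1})=s^p(w'H^{k+1})$.

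There is essentially no obstacle beyond bookkeeping: validity of the extended scenarios is immediate since $H\in R$ and $w,w'\in R^*$ imply $wH^k,w'H^k\in R^*$, and the synchronous semantics of Section~\ref{execution} guarantees that the new state of a process in a given round is a deterministic function of its previous state and the multiset of messages received in that round. The only subtle point, which is precisely what makes the proposition work and justifies the name \emph{adjacency property}, is the use of Lemma~\ref{bstable} to confine the recursion to $B(H)$; this is the step where mobility of the scheme is implicitly used, through the fact that $H^k$ is a legal suffix of $w$ and $w'$ for every $k$.
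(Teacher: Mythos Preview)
Your proof is correct and is precisely the detailed version of the paper's one-line argument: the paper merely states that Lemma~\ref{bstable} ensures that processes in $B(H)$ can only receive information from $B(H)$ under $H^k$, and your induction on $k$ is the natural way to make this rigorous.
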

\begin{proof}
  The proof relies upon Lemma~\ref{bstable} which implies that
  processes from $B(H)$ can only receive information from $B(H)$ under
  scenario $H^k$, for any $k\in\N$. 
\hfill$\qed$\end{proof}

\subsection{A Continuity Property}

\begin{lemma}\label{indist}
  Let $H,H'\in R$ such that $H\alpha_{B(A)}H'$ for some $A\in
  R$. Then for all $w\in R^\omega$ and all $p\in B(A)$, $s^p(wH)=s^p(wH')$.
\end{lemma}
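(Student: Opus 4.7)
The plan is to unfold the hypothesis $H\alpha_{B(A)}H'$ and then run a one-round simulation argument on top of the common prefix $w$. By the definition of the relation $\alpha$, the hypothesis says exactly that $In_{B(A)}(H)=In_{B(A)}(H')$; equivalently, for every $p\in B(A)$, the set of arcs of $H$ with head $p$ coincides with the corresponding set in $H'$, so $p$ has the same in-neighbours delivering a message under $H$ as under $H'$.

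Next, I would note that the two executions under $wH$ and $wH'$ follow the same partial scenario $w$ during their first $|w|$ rounds, so the configuration reached at the end of round $|w|$ is the same in both runs. In particular, for every node $v$, the state $s^v(w)$ is identical in the two executions, and therefore the message that algorithm \algo instructs $v$ to send in round $|w|+1$ is the same in both runs.

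In round $|w|+1$, a process $p\in B(A)$ receives a message from $v$ exactly when the arc $(v,p)$ belongs to the round's communication event. By the first paragraph these arc sets coincide between $H$ and $H'$, and by the second paragraph the content transmitted along each such arc is identical. Hence $p$ receives the same messages from the same senders in both runs; since its local update is a deterministic function of $s^p(w)$ and of the received messages, we conclude $s^p(wH)=s^p(wH')$.

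I do not expect a substantive obstacle: once the relation $\alpha_{B(A)}$ is translated into the concrete statement about delivered messages, the proof reduces to comparing two executions that differ only in their last round. The only minor point worth addressing is the slightly loose quantifier ``$w\in R^\omega$'': since $wH$ only makes sense when $w$ is finite, I read this as ``for every finite prefix $w$ of any scenario in $R^\omega$'', after which the one-round simulation argument applies uniformly.
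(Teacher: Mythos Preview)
Your proposal is correct and follows essentially the same approach as the paper's proof, which is a one-line appeal to the definition of $\alpha$: processes in $B(A)$ receive the same messages from the same senders in both rounds, hence end in the same state. Your version merely spells out the common-prefix observation and the deterministic update explicitly, and your remark about reading $w\in R^\omega$ as a finite prefix is a sensible clarification of a minor notational looseness in the statement.
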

\begin{proof}
  By definition of $\alpha_U$ relations, processes in $B(A)$
  cannot distinguish $H$ from $H'$ meaning they are receiving
  the exact same messages from exactly the same nodes in both
  scenarios.
  Hence they end in the same states.   
\hfill$\qed$\end{proof}

\begin{proposition}\label{continuity}
  Let $H, H' \in R$ such that $H\beta H'.$ Then for every $w\in
  R^*$, there exist $H_1,\dots,H_q$ in the $\beta-$class of $H$
  and $H'$, and $A_0,\dots,A_{q}\in R$, such
  that for every $0\leq i\leq q$ and every $p\in B(A_i)$, 
  $s^p(wH_i)=s^p(wH_{i+1})$, where $H_0=H$ and $H_{q+1}=H'$.  
\end{proposition}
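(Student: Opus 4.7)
The plan is to derive Proposition~\ref{continuity} as a direct consequence of the Closure Property built into the definition of $\beta$, combined with a pointwise application of Lemma~\ref{indist}. The content of the proposition is that $\beta$-equivalence supplies enough intermediate events $H_i$, each with a distinguishing ``viewpoint'' $B(A_i)$, to bridge $H$ and $H'$ locally after any finite prefix $w$; the role of Lemma~\ref{indist} is then to convert each link of this chain into an actual equality of local states.

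The first step is to instantiate the Closure Property of $\beta$ at $G = H$. Since $H\beta H'$, it yields a finite chain $H = H_0, H_1, \ldots, H_q, H_{q+1} = H'$ with all intermediate $H_i$ lying in the $\beta$-class of $H$, together with witness events $A_0, A_1, \ldots, A_q \in R$, such that $H_i\, \alpha_{A_i}\, H_{i+1}$ for every $0 \leq i \leq q$. The second step is to apply Lemma~\ref{indist} to each individual link: the relation $H_i\,\alpha_{A_i}\,H_{i+1}$ says exactly that $H_i$ and $H_{i+1}$ induce the same incoming arcs onto $B(A_i)$, so no process in $B(A_i)$ can distinguish the two events during the corresponding round. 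The lemma then yields $s^p(wH_i) = s^p(wH_{i+1})$ for every $p \in B(A_i)$ and every prefix $w$, which is precisely the equality claimed in the statement at index $i$.

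The main obstacle is essentially clerical. One has to align the indexing of the chain delivered by the Closure Property with the indexing chosen in the proposition (the convention $H_0 = H$, $H_{q+1} = H'$, with $q+1$ witnesses $A_0, \ldots, A_q$), and to handle the degenerate case $H = H'$ via an empty chain of length zero. A secondary subtlety is that Lemma~\ref{indist} is stated for $w \in R^\omega$, whereas the concatenation $wH$ in its conclusion only makes sense when $w$ is finite; the lemma in fact establishes the same equality for every $w \in R^*$, which is what the proposition requires. I note also that the Closure Property delivers the stronger containment ``$A_i$ is itself $\beta$-equivalent to $H$'', which is not needed in the present statement but should be helpful for the downstream bivalency-style argument driving Theorem~\ref{impossibility}.
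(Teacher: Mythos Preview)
Your proposal is correct and follows exactly the approach of the paper, whose proof is the one-liner ``By Lemma~\ref{indist} and definition of $\beta$.'' Your observations about the indexing mismatch between the Closure Property and the proposition, and about the $R^\omega$ versus $R^*$ slip in the statement of Lemma~\ref{indist}, are accurate clerical remarks rather than gaps in the argument.
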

\begin{proof}
  By Lemma~\ref{indist} and definition of $\beta$.
\hfill$\qed$\end{proof}

\subsection{End of Proof of Theorem \ref{impossibility}.}

  We will use a standard bivalency technique. We suppose that we
  have an algorithm that solves Consensus. 
  A configuration is said to be $0-$valent (resp.~$1-$va\-lent) if all
  extensions decide $0$ (resp.~$1$). 
  A configuration is said to be bivalent subject to $L$ if there
  exists an extension in $L$ that decides 0 and another extension in $L$
  that decides 1. 

  \begin{lemma}[Restricted Initial Bivalent
    Configuration]\label{bivinit} 
    If there exists a source-incompatible set $D$, then there
    exists an initial configuration that is bivalent subject to
    $D^\omega$.
  \end{lemma}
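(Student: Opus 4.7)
The plan is to run the standard bivalency/hybrid chain argument, but to index the chain by initial configurations, and to use the source-incompatibility of $D$ to exhibit a set of nodes that are ``blind'' at the critical index in the chain.

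First I would fix an arbitrary ordering $v_1,\dots,v_n$ of $V$ and consider the chain of initial configurations $\iota_0,\iota_1,\dots,\iota_n$ where $\iota_0$ assigns $0$ to every node, $\iota_n$ assigns $1$ to every node, and $\iota_k$ is obtained from $\iota_{k-1}$ by flipping the value at $v_k$ from $0$ to $1$. By Validity, $\iota_0$ is $0$-valent subject to $D^\omega$ and $\iota_n$ is $1$-valent subject to $D^\omega$. If any $\iota_k$ is already bivalent subject to $D^\omega$ we are done; otherwise every $\iota_k$ is univalent, and, because the valency switches between $\iota_0$ and $\iota_n$, there is an index $k$ with $\iota_{k-1}$ being $0$-valent and $\iota_k$ being $1$-valent subject to $D^\omega$. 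These two configurations differ only at $v_k$.

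Next I invoke the source-incompatibility of $D$. Since $\bigcap_{H\in D}B(H)=\emptyset$, there is some $H\in D$ with $v_k\notin B(H)$, and $B(H)$ is non-empty by the first part of Definition~\ref{intersection}. Consider the scenario $H^\omega\in D^\omega$ and the two executions of the Consensus algorithm starting from $\iota_{k-1}$ and $\iota_k$ respectively. The initial states of the nodes of $B(H)$ coincide (the configurations differ only at $v_k\notin B(H)$), and by Lemma~\ref{bstable} a node of $B(H)$ can receive messages under $H$ only from other nodes of $B(H)$. A straightforward induction over rounds --- the content of Proposition~\ref{adjacency}, applied after minor restatement to allow two distinct initial configurations under the same scenario --- shows that the states of the nodes of $B(H)$ evolve identically in both executions, and hence the decisions of the nodes of $B(H)$ agree in both runs.

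But $H^\omega\in D^\omega$, so by $0$-valency of $\iota_{k-1}$ every node decides $0$ in the first execution, and by $1$-valency of $\iota_k$ every node decides $1$ in the second; since $B(H)\neq\emptyset$ this is a contradiction, and therefore some $\iota_k$ must be bivalent subject to $D^\omega$. The only real obstacle I foresee is notational: Proposition~\ref{adjacency} is stated for two scenarios under a single initial configuration, whereas here I need it for a single scenario under two initial configurations, but the underlying argument is identical and depends only on Lemma~\ref{bstable}. The argument crucially uses both conditions of source-incompatibility --- non-emptiness of each $B(H)$ to produce a witness decider, and empty intersection of the $B(H)$ to produce a scenario that is blind to $v_k$.
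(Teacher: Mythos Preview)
Your argument is correct and follows the same bivalency/hybrid-chain strategy as the paper: build a chain of initial configurations from all-$0$ to all-$1$, locate the critical flip, and use source-incompatibility to exhibit an $H\in D$ whose sources $B(H)$ cannot see the flipped node(s), obtaining a contradiction via Lemma~\ref{bstable}/Proposition~\ref{adjacency}.

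The one genuine difference is granularity. The paper first partitions the relevant nodes into atoms $M_1,\dots,M_k$ generated by the sets $B(H_i)$ and flips one atom at a time; its endpoints $\iota_0,\iota_k$ are not literally the all-$0$/all-$1$ configurations, so it needs an extra indistinguishability step to establish their univalency. You instead flip one vertex at a time over all of $V$, so your endpoints are exactly all-$0$ and all-$1$ and are univalent directly by Validity, and at the critical index you only need that $v_k\notin B(H)$ for some $H$, which follows immediately from $\bigcap_{H\in D}B(H)=\emptyset$. Your route is therefore slightly more elementary; the paper's $M_j$-decomposition is not needed here (it is reused later in the proof of Theorem~\ref{convex-consensus}, which may explain why the authors set it up early). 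Your observation that Proposition~\ref{adjacency} must be read with two initial configurations under a single scenario, rather than two scenarios under one configuration, is accurate and harmless --- the proof of that proposition depends only on Lemma~\ref{bstable} and goes through verbatim.
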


  \begin{proof}
    Suppose that $\{H_1,\dots,H_q\}$ is a source-incompatible set in $D$.
    There exist disjoint non-empty sets of nodes
    $M_1,\dots,M_k$ such that $\forall i, \exists I\subset[1,k]$,  
    $B(H_i)=\mathop{\bigcup}\limits_{j\in I} M_j.$ 
    
    Consider $\iota_0$ (resp.\ $\iota_{k}$) in which all nodes of 
    $\mathop{\bigcup}\limits_{1\leq j\leq k} M_j$ have initial value
    $0$ (resp.\ $1$). The initial configuration $\iota_0$ is
    indistinguishable from the configuration in which all nodes have
    initial value $0$ for the nodes of $B(H_i)$ under scenario
    $H_i^\omega$, for every $i$. 
    Hence $\iota_0$ is $0-$valent. Similarly $\iota_k$
    is $1-$valent. We consider now the initial configurations
    $\iota_l$, $1\leq l\leq k-1$ in which all nodes from
    $\mathop{\bigcup}\limits_{1\leq j\leq k-l} M_j$
    have initial value $0$, and all other nodes have initial value
    $1$. 

    Suppose now that all initial configurations are univalent. Then
    there exists $1\leq l\leq k$ such that $\iota_{l-1}$ is $0-$valent and
    $\iota_l$ is $1-$valent. As the set is source-incompatible,
    there must exist $i\in[1,q]$ such that $M_l\cap B(H_i) =
    \emptyset.$ 
    So, we can apply Proposition~\ref{adjacency} to $H_i$.
    This means that all nodes in $B(H_i)$ decide the same value for
    both initial configurations, $\iota_{l-1}$ and $\iota_l$, 
    under scenario $H_i^\omega$, and this is a contradiction. 
  \hfill$\qed$\end{proof}

  \begin{lemma}[Restricted Extension]\label{extcont}
    Let $C$ be a $\beta-$class.
    Every bivalent configuration in $C^\omega$ has a succeeding
    bivalent configuration in $C^\omega$.
  \end{lemma}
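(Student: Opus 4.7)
The plan is to run a bivalency argument by contradiction. Assume $w\in C^*$ produces a bivalent configuration subject to $C^\omega$, yet every one-round extension $wH$ with $H\in C$ is univalent subject to $C^\omega$. From the bivalence of $w$ there exist scenarios $wx_0, wx_1\in C^\omega$ deciding $0$ and $1$ respectively; letting $H_0, H_1\in C$ denote their first letters, the univalence assumption forces $wH_0$ to be $0$-valent and $wH_1$ to be $1$-valent subject to $C^\omega$.

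Because $H_0, H_1$ lie in the same $\beta$-class $C$, I will feed the pair $(H_0, H_1)$ together with the prefix $w$ into Proposition~\ref{continuity}. This delivers a chain $G_0 = H_0, G_1, \ldots, G_{q+1} = H_1$ of events in the $\beta$-class $C$ and auxiliary events $A_0, \ldots, A_q$; inspecting the closure property that defines $\beta$, the $A_i$'s are in fact the witnesses $K_i$ from that property and therefore also lie in $C$. Continuity moreover guarantees that, at each link, the sources of $A_i$ hold identical local states after $wG_i$ and $wG_{i+1}$. Since each $wG_j$ is univalent by assumption and the endpoints have opposite valencies, walking the valency sequence produces an index $i$ at which $wG_i$ is $0$-valent and $wG_{i+1}$ is $1$-valent.

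The finishing move applies Proposition~\ref{adjacency} with $H = A_i$ to the equal-length prefixes $wG_i, wG_{i+1}$: the equality $s^p(wG_i) = s^p(wG_{i+1})$ on $B(A_i)$ lifts to $s^p(wG_i A_i^k) = s^p(wG_{i+1} A_i^k)$ for every $k\in\N$. Since $A_i\in C$, both $wG_i A_i^\omega$ and $wG_{i+1} A_i^\omega$ belong to $C^\omega$; by $0$-valency and $1$-valency respectively, they must terminate with every process deciding $0$ and $1$ respectively. Under the standing assumption that Consensus is $R^\omega$-solvable, Proposition~\ref{nonreachable} forces $B(A_i)\neq\emptyset$, so any fixed $p\in B(A_i)$ is compelled to decide both $0$ and $1$, which is the contradiction we seek.

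The main obstacle is a membership bookkeeping: the whole argument hinges on staying inside $C^\omega$, which requires checking both that the auxiliary events $A_i$ produced by Proposition~\ref{continuity} belong to the $\beta$-class $C$ (recovered from the closure property of $\beta$) and that each $B(A_i)$ is non-empty (recovered from Proposition~\ref{nonreachable} applied to the solvability hypothesis). Once these two memberships are secured, the combination of continuity across the critical link $G_i\leadsto G_{i+1}$ with adjacency under the subsequent $A_i^\omega$-tail produces a clean bivalency contradiction.
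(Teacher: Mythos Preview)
Your proof is correct and follows essentially the same bivalency-plus-chain argument as the paper: apply Proposition~\ref{continuity} to get a chain of one-step extensions linking a $0$-valent successor to a $1$-valent one, then use Proposition~\ref{adjacency} along $A_i^\omega$ to force a process in $B(A_i)$ to decide both values. The only differences are cosmetic (you locate a single critical index, the paper walks the whole chain) together with two bookkeeping points you make explicit---that the auxiliary events $A_i$ lie in $C$ by the Closure Property defining $\beta$, and that $B(A_i)\neq\emptyset$ via Proposition~\ref{nonreachable}---both of which the paper leaves implicit but are indeed needed for the $A_i^\omega$-extensions to stay inside $C^\omega$ and for the contradiction to bite.
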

  \begin{proof}
    Consider a bivalent configuration obtained after a partial
    execution subject to $w\in C^*$.
    By way of contradiction, suppose that all succeeding configurations
    in $C^\omega$ are
    univalent. Then there exist succeeding configurations
    $wH$ and $wH'$ that are respectively
    $0-$valent and $1-$valent, as $w$ is bivalent. 
    
    By Proposition~\ref{continuity}, there exist $H_1,\dots,H_q$ in $C$
    and $A_0,\dots,A_{q}\in R$  
    such that $s^p(wH_i)=s^p(wH_{i+1})$ for every $0 \leq i \leq q$ and every
    $p\in B(A_i)$, where $H_0=H$ and $H_{q+1}=H'.$
    By hypothesis, all succeeding configurations
    $wH_i$ are univalent. As
    $H\alpha_{B(A_0)}H_1$, we get that
    processes in $B(A_0)$ are in the same state after $H$ and after
    $H_1$. Hence, by Proposition~\ref{adjacency}, they are also in the
    same state after $HA_0^k$ and after $H_1A_0^k$, so they
    decide the same value and $wH_1$ is $0-$valent. We can repeat
    this for any $1\leq i\leq q$. Hence $wH'$ is also $0-$valent, a
    contradiction. 
  \hfill$\qed$\end{proof}
      
  We can now finish the proof with the standard bivalency arguments.
    Suppose that we have a source-incompatible set in the same
    $\beta-$class $C$.
    Also suppose that there exists an $R^\omega-$reliable Consensus
    algorithm for $G$.  
    By Lemma~\ref{bivinit}, there exists an 
    initial configuration that is bivalent in $C^\omega$. 
    From Lemma~\ref{extcont}, we deduce that
    the algorithm does not satisfy the Termination property for Consensus 
    on some execution subject to $C^\omega\subset
    R^\omega$, which is a contradiction.
    Using Proposition~\ref{nonreachable} and Theorem~\ref{equivb}, we
    conclude the proof of Theorem~\ref{impossibility}.

\section{Solvability of Consensus vs Broadcast}
\label{equivalence}
In this section, we prove that the Consensus and Broadcast problems
are equivalent for the large family of omission schemes that are
defined over convex sets of events. This has very important
consequences as checking Broadcastability is quite simple
(see Theorem~\ref{broadcastability}). 

\begin{definition}\label{convexity}
  A set of communication events $R$ is \emph{convex} if, for every
  $H,H'\in R$ and every $a\in H'$, $H\cup\{a\}\in R$.  
\end{definition}

Basically, this definition says that a convex set of communication
events $R$ is closed under the operation of adding a reliable
communication event $a$ from one event $H'$ to another event $H$.
This is an important subfamily because sets of events that are defined by
bounding the number of omissions, \emph{for any way of counting
  them}, are convex.
Stated differently, adding links to an event $H$ with a bounded number
of omissions cannot result in an event with more omissions.
The convexity property does not depend upon the way that
omissions are counted.

\begin{theorem}
  Let $R\subset \Sigma$ be a convex set of communication events
  over a graph $G$. Then
  Consensus is $R^\omega-$solvable if and only if $G$ is
  $R^\omega-$Broadcastable. 
  \label{convex-consensus}
\end{theorem}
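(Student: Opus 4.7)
The plan is to combine the first reduction from Section~\ref{1streduc} with Theorem~\ref{impossibility} via a structural lemma tailored to convex sets. The easy direction, that $R^\omega$-broadcastability implies $R^\omega$-solvability of Consensus, is an immediate consequence of the first reduction, so the work is in the converse. I argue the contrapositive: if $G$ is not $R^\omega$-broadcastable, then Consensus is not $R^\omega$-solvable. By Theorem~\ref{equivb}, being non-broadcastable means either (i) some event in $R$ has no source, which already prevents Consensus by Proposition~\ref{nonreachable}, or (ii) every event has a source but $R$ is source-incompatible; only case (ii) requires substantial work.

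The key lemma I aim to establish is: \emph{if $R$ is convex, source-incompatible, and every event of $R$ has a source, then $R$ forms a single $\beta$-equivalence class}. Granting this, Theorem~\ref{impossibility} applied to the unique class $C = R$ yields the desired implication, since $R^\omega$-solvability of Consensus would force $G$ to be $R^\omega$-broadcastable, contradicting the standing assumption in the contrapositive.

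To prove the lemma I will check that the equivalence relation $\sim$ on $\Sigma$ whose unique non-trivial class is exactly $R$ (so $G \sim H$ iff $G, H \in R$) is contained in $\alpha^*$ and satisfies the closure property. Coarsestness of $\beta$ will then give $\sim \subseteq \beta$, which is precisely the statement that all events of $R$ are $\beta$-equivalent. Both verifications reduce to a single construction: given $G, H \in R$, build an $\alpha$-chain $G = G_0, G_1, \ldots, G_q = H$ inside $R$ together with witnesses $K_1, \ldots, K_q \in R$ such that $G_{i-1} \alpha_{K_i} G_i$. I use $G \cup H$ as a meeting point: first ascend from $G$ to $G \cup H$ by adding the arcs of $H \setminus G$ one at a time, using convexity (Definition~\ref{convexity}) to keep each intermediate graph in $R$; then descend from $G \cup H$ to $H$ by reversing an analogous ascent from $H$ to $G \cup H$, relying on the symmetry of $\alpha_K$. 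For each single-arc step modifying an arc with head $t$, the relation $\alpha_{K_i}$ holds whenever $t \notin B(K_i)$, and source-incompatibility guarantees that such a $K_i \in R$ with $B(K_i) \neq \emptyset$ exists for every $t \in V$ (since the finite source-incompatible subfamily has empty intersection of sources).

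The \textbf{main obstacle} is precisely the asymmetry of convexity: Definition~\ref{convexity} permits only the addition of arcs, not their removal, so the descent from $G \cup H$ down to $H$ cannot be performed directly and must be recovered by reversing a forward chain; this is justified only because $\alpha_K$ is symmetric in its two graph arguments. A secondary concern is verifying that the closure property is genuinely fulfilled, since every event and every witness appearing in the chain must remain in the purported $\beta$-class; the maximal choice $C = R$ makes this automatic, because all $G_i$ belong to $R$ by convexity and all $K_i$ belong to $R$ by construction.
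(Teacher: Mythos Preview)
Your proposal is correct and follows essentially the same route as the paper: both establish that a convex, source-incompatible $R$ consists of a single $\beta$-class by building $\alpha$-chains through $G\cup H$, using convexity to keep every intermediate event inside $R$ and source-incompatibility to supply, for each head vertex $t$, a witness $K\in R$ with $t\notin B(K)$. Your arc-by-arc construction is simply a finer-grained version of the paper's grouping of arcs by the atoms $M_j$ of the source-set partition, and your explicit check of the closure property (that all $H_i$ and $K_i$ lie in the candidate class $R$) makes precise a step the paper leaves implicit.
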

\begin{proof}
By Theorem~\ref{impossibility}, we only have to show that there is no
source-incompatible set in $R$. 
We will show that if there is such a set $\{H_1,\dots,H_q\}$, 
then there is only one $\beta_R$ class.  

There exist disjoint,
non-empty sets of nodes $M_1,\dots,M_k$ such that
$\forall i, \exists I\subset[1,k]$,  
$B(H_i)=\mathop{\bigcup}\limits_{j\in I} M_j$.
The $M_j$ are ``generators'' for the sets of sources.
We use $M_J$ to denote $\mathop{\bigcup}\limits_{j\in J} M_j$
for any $J \subset [1,k].$ 
Note that, as the intersection of the $H_i$ is empty
(Definition~\ref{intersection}),  for each $j\in [1,k]$, 
there exists $i_j$ such that $M_j\cap B(H_{i_j}) = \emptyset$.

Now, consider $H_1 \neq H_2 \in R$. 
We will show that $H_1\;\beta\; (H_1\cup H_2)$.
Using the decomposition into $M_j$s,  
there exist three mutually disjoint (possibly empty) subsets
$J_1,J,J_2$ of $[1,k]$,
such that $B(H_1) = M_{J_1\cup J}$ and $B(H_2) = M_{J_2\cup J}.$

Let $H'_1=H_1\cup\{(s,t)\in H_2 \mid t\in M_{J_1}\}$. As the
intersection of $B(H_2)$ with $M_{J_1}$ is empty, we have
$H_1 \alpha_{B(H_2)} H'_1.$

Similarly, letting $H''_1 = H'_1\cup\{(s,t)\in H_1 \mid t\in M_{J_2}\},$
we have $H'_1 \alpha_{B(H_1)} H''_1.$

To obtain $H_1\cup H_2$, we need to add to $H''_1$ the arcs with
heads in $B(H_1)\cap B(H_2)=M_J$. Let $J=\{j_1,\dots,j_q\}$ and
$K_k=H''_1\mathop{\bigcup}\limits_{l\leq k} M_{j_l}.$ 

For each $l\in J$, 
there exists $i_l$ such that $M_l\cap B(H_{i_l}) = \emptyset$.
Therefore, for all $k$, $K_{k-1} \alpha_{B(H_{i_k})} K_k$.
So finally, we obtain $H_1\;\beta\; (H_1\cup H_2)$. Similarly, 
$H_2\;\beta\; (H_1\cup H_2)$, and $H_1\beta H_2.$
\hfill$\qed$\end{proof}

Let $O_f(G)$ denote the set of communication events with
at most $f$ omissions from the underlying graph $G$. 
An upper bound on $f$ for solvability of Consensus subject to
$O_f(G)^\omega$ was given in
\cite{SW07}, and it was proved to be tight with an ad hoc technique in
\cite{2generals}. We can now state this result as an immediate corollary of
Theorem~\ref{convex-consensus}.  

\begin{corollary}
  Let $f\in\N$. Consensus is solvable subject to
  $\mathcal O_f(G)^\omega$ if and only if $f<c(G)$, where $c(G)$ is the
  connectivity of the graph $G$. 
\end{corollary}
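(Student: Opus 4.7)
The plan is to derive this as a direct corollary of Theorem~\ref{convex-consensus}. The first step is to check that $\mathcal{O}_f(G)$ is a convex family in the sense of Definition~\ref{convexity}: given $H,H' \in \mathcal{O}_f(G)$ and an arc $a \in H'$, the graph $H \cup \{a\}$ contains no fewer arcs than $H$, hence no more than $f$ omissions from $G$, so it again lies in $\mathcal{O}_f(G)$. With convexity in hand, Theorem~\ref{convex-consensus} reduces the corollary to the statement that $G$ is $\mathcal{O}_f(G)^\omega$-broadcastable if and only if $f < c(G)$.

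The next step is to invoke Theorem~\ref{broadcastability}, which reduces $\mathcal{O}_f(G)^\omega$-broadcastability to the existence of a single node $u$ that is simultaneously a source for every event $H \in \mathcal{O}_f(G)$. What remains is then a purely graph-theoretic characterization, which I would handle in the two standard directions.

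For the easy direction, if $f < c(G)$ then removing any at most $f$ arcs from $G$ leaves a strongly connected spanning sub-digraph, so every $H \in \mathcal{O}_f(G)$ has \emph{every} node as a source, and in particular there are common sources in abundance. For the converse, if $f \geq c(G)$, then for any candidate node $u$ I would apply Menger's theorem to produce an arc-cut of size at most $c(G) \leq f$ that disconnects $u$ from some other vertex; the sub-digraph obtained by removing this cut is an event in $\mathcal{O}_f(G)$ for which $u$ is not a source, ruling out any common source.

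The main obstacle, which is quite mild, is being careful that $c(G)$ is interpreted as the arc-connectivity of the (symmetric) digraph underlying $G$, so that deleting strictly fewer than $c(G)$ arcs indeed preserves strong connectivity; this matches the usage in \cite{SW07,2generals}. Once that is pinned down, the corollary follows immediately by combining convexity of $\mathcal{O}_f(G)$, Theorem~\ref{convex-consensus}, Theorem~\ref{broadcastability}, and Menger's theorem.
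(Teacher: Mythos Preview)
Your proposal is correct and follows essentially the same route as the paper, which simply declares the result an immediate corollary of Theorem~\ref{convex-consensus} (using that $\mathcal{O}_f(G)$ is convex) and defers the graph-theoretic identification of broadcastability with $f<c(G)$ to \cite{SW07,2generals}. You additionally spell out that identification via Theorem~\ref{broadcastability} and Menger's theorem, which is exactly the intended content of those citations; your caveat about reading $c(G)$ as arc-connectivity is the right one to flag.
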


The equivalence of Consensus and Broadcast includes the number of rounds
to solve them.
\begin{proposition}
  Let $R\subset \Sigma$ be a convex set of events. If Consensus is
  $R^\omega-$solva\-ble then it is solvable with exactly the
  same number of rounds as Broadcast subject to $R^\omega$.
\end{proposition}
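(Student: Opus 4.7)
Denote by $r_C$ and $r_B$ the optimal round complexities of Consensus and Broadcast subject to $R^\omega$. The inequality $r_C \leq r_B$ follows at once from the First Reduction of Section~\ref{1streduc}: from an $r_B$-round Broadcast algorithm originating at the common source $u^*$, the algorithm ``broadcast the initial value of $u^*$ and decide on it'' yields a Consensus algorithm terminating in $r_B$ rounds. The interesting direction is $r_B \leq r_C$.

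Given any $r$-round $R^\omega-$reliable Consensus algorithm $A$, my plan is to show that the flooding algorithm $\mathcal F_{u^*}^{r}$ is $R^\omega-$reliable for Broadcast from $u^*$, where $u^* \in \bigcap_{H\in R} B(H)$ is the common source guaranteed by the proof of Theorem~\ref{convex-consensus}. Equivalently, I need to show that for every finite scenario $w\in R^r$, flooding from $u^*$ reaches every node in $r$ rounds under $w$. I would split this into two claims: (i) for each $w\in R^r$ the set $K(w)$ of nodes whose $r$-round flooding under $w$ reaches all of $V$ is non-empty, and (ii) $u^*$ itself belongs to $K(w)$ for every $w\in R^r$.

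Claim (i) follows from a standard indistinguishability argument in the spirit of Propositions~\ref{adjacency} and \ref{continuity}: if some $v$ receives no information from $u$ within $r$ rounds under $w$, then $v$'s decision under $w$ is independent of $u$'s initial value; by Agreement, no node's decision depends on that initial value. If $K(w)=\emptyset$, iterating this observation for every $u\in V$ forces the decision under $w$ to be constant in the initial configuration, contradicting Validity on the all-$0$ and all-$1$ inputs. Claim (ii), where I expect the main technical obstacle to lie, would be proved by adapting the convex-combination argument from Theorem~\ref{convex-consensus} to $r$-round scenarios: if $u^*\notin K(w_0)$ for some $w_0 \in R^r$, I would combine $w_0$ with well-chosen auxiliary scenarios using the convexity of $R$ (noting that the element-wise union of two scenarios in $R^r$ is again in $R^r$) to produce a scenario $w^*$ with $K(w^*)=\emptyset$, contradicting (i). The delicate point is extending the $\alpha/\beta$ and source-incompatibility machinery from single events to $r$-round scenarios in a way that preserves the obstruction; this is where the technical work of the proof would be concentrated.
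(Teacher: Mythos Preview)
Your easy direction and Claim~(i) are both fine; the indistinguishability argument showing $K(w)\neq\emptyset$ for every $w\in R^{r}$ is correct and is in the same spirit as the paper's initial-bivalency step. The genuine gap is Claim~(ii). Your proposed mechanism --- take element-wise unions of $w_0$ with auxiliary scenarios to manufacture a $w^*$ with $K(w^*)=\emptyset$ --- cannot work, because flooding is monotone in the set of arcs: if $w\subseteq w'$ round-by-round, then $\mathit{Inform}(w)\subseteq\mathit{Inform}(w')$ (this is exactly Lemma~\ref{subwordflood} applied trivially), hence $K(w)\subseteq K(w')$. Unions can only \emph{enlarge} $K$, never empty it, so the reduction to Claim~(i) collapses. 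You acknowledge that ``the delicate point is extending the $\alpha/\beta$ and source-incompatibility machinery from single events to $r$-round scenarios,'' but the direction you propose to push that machinery is the wrong one; there is no way to salvage (ii) along these lines. Moreover, you aim for something stronger than necessary: you try to pin down the specific source $u^*$, whereas it suffices to show $\bigcap_{w\in R^{r}}K(w)\neq\emptyset$.

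The paper proceeds differently. It does \emph{not} attempt to exhibit a broadcaster; instead it runs a bivalency argument under the contrapositive hypothesis that \emph{no} originator can broadcast within $r_c$ rounds. The existence of a bivalent initial configuration uses essentially your Claim~(i) idea (for each $v_l$ there is a scenario and a node that never hears from $v_l$). The crucial step you are missing is the \emph{extension} lemma: from a bivalent configuration with univalent successors $wH_0$ (0-valent) and $wH_1$ (1-valent), convexity lets one interpolate from $H_0$ to $H_0\cup H_1$ by adding arcs grouped by their head vertex; two consecutive interpolants differ only in what a \emph{single} node $t$ receives, and since $t$ cannot broadcast in $r_c$ further rounds (the hypothesis applied to $t$), the two successors must have the same valency --- a contradiction. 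Here convexity is used to \emph{add} arcs while keeping the chain inside $R$, and the ``Broadcast fails from every node'' hypothesis is invoked at each interpolation step, not just once. Your proposal has no analogue of this per-step use of the Broadcast lower bound, which is why Claim~(ii) stalls.
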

\begin{proof}
  We only have to show that Consensus cannot be solved in fewer
  rounds than Broadcast. 
  Due to space limitations, we only present a sketch of the proof.
  We use the same bivalency technique as in
  Section~\ref{proof}. So, suppose that Consensus is solvable in $r_c$
  rounds while Broadcast needs more than $r_c$ rounds for any
  originator.  

  First, we show that there must be a bivalent initial configuration. 
  Let $V=\{v_1,\dots,v_n\}$ and let $\iota_l$ be the initial
  configuration in which $v_i$ has initial value $0$ if $i\leq l$.
  If all initial configurations $\iota_l$, $1\leq l\leq n$ are univalent,
  then there exists $l$ such that $\iota_{l-1}$
  is $0-$valent and $\iota_l$ is $1-$valent. As a Broadcast from $v_l$
  needs strictly more than $r_c$ rounds, there exists a vertex $v$ that
  does not receive the value from $v_l$, so 
  no executions of length $r_c$ of the
  Consensus algorithm from initial configurations $\iota_{l-1}$ and
  $\iota_l$ can be distinguished by $v$. Consequently $v$ will
  decide the same value for both initial configurations, 
  a contradiction.

  Now we show that if all extensions of a bivalent configuration
  are univalent, then the Consensus algorithm needs more
  than $r_c$ rounds to conclude. Indeed, if we have an extension,
  starting with communication event $H_0$, that is $0-$valent, and
  another extension, starting with communication event $H_1$, that is
  $1-$valent, we can repeat the above technique by adding arcs to
  $H_0$ to obtain $H_0\cup H_1$. 
  The addition of arcs can be done by grouping them according to their heads.
  If only one node has a different state for two events, 
  then it would need more than $r_c$ rounds to inform all other
  nodes. 
\hfill$\qed$\end{proof}

There are many results concerning the Broadcast problem in
special families of networks. General graphs are studied in
\cite{chlebus_broadcasting_1994} and hypergraphs are studied
in \cite{de_marco_broadcasting_1998}.  An optimal algorithm for
the family of hypercubes in given in \cite{dobrev_optimal_1999}.
For a hypercube of dimension $n$, if
at most $n-1$ messages are lost during each round, then Broadcast can be
solved in $n+2$ rounds, compared to $n$ rounds when there are no
omission faults. In \cite{dobrev_dynamic_2004}, the precise
impact on Broadcast of the actual number of faults is given. 
Based on the results in \cite{dobrev_dynamic_2004}, we get
the following bounds for Consensus.
\begin{corollary}
  In hypercubes of dimension $n$, if the global number of omissions is
  at most $f$ per round, then
  \begin{compactenum}
  \item if $f\geq n$, then Consensus is not solvable,
  \item if $f=n-1$, Consensus is solvable in exactly $n+2$ rounds,
  \item if $f=n-2$, Consensus is solvable in exactly $n+1$ rounds,
  \item if $f<n-2$, Consensus is solvable in exactly $n$ rounds.
  \end{compactenum}
\end{corollary}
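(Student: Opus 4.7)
The plan is to derive the four items as essentially immediate consequences of three ingredients already in hand: the connectivity-based solvability corollary for $\mathcal O_f(G)^\omega$, the round-tight equivalence proposition for convex schemes, and the Broadcast complexity results for hypercubes cited from~\cite{dobrev_dynamic_2004}. No new combinatorial work is needed; the task is to stitch these together correctly for $G=Q_n$.

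First I would record the standard fact that the $n$-dimensional hypercube $Q_n$ has connectivity $c(Q_n)=n$. The preceding corollary then gives item~1 directly: when $f\ge n$, we have $f \ge c(Q_n)$, so Consensus is not $\mathcal O_f(Q_n)^\omega$-solvable. Next I would verify that $\mathcal O_f(Q_n)$ is convex in the sense of Definition~\ref{convexity}: if $H,H'\in \mathcal O_f(Q_n)$ and $a\in H'$, then $H\cup\{a\}$ has at least as many arcs as $H$, hence at most $f$ omissions, so $H\cup\{a\}\in \mathcal O_f(Q_n)$. With convexity confirmed, the round-tight proposition of this section applies, and Consensus under $\mathcal O_f(Q_n)^\omega$ is solvable in exactly the same number of rounds as Broadcast under $\mathcal O_f(Q_n)^\omega$ whenever $f<n$.

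Having reduced everything to Broadcast, I would then quote the tight Broadcast bounds for hypercubes from~\cite{dobrev_dynamic_2004}: with a global omission bound of $f$ per round, Broadcast on $Q_n$ requires exactly $n+2$ rounds when $f=n-1$, exactly $n+1$ rounds when $f=n-2$, and exactly $n$ rounds when $f<n-2$. Combining these three figures with the round-tight equivalence yields items~2, 3, and~4 in turn.

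The ``main obstacle'' here is really only bookkeeping: checking convexity of $\mathcal O_f(Q_n)$ and confirming that ``number of rounds'' is interpreted identically on both sides of the equivalence, both of which are immediate from the definitions. All of the genuine mathematical content---the existence of fault-tolerant Broadcast algorithms meeting these round counts, and the matching lower bounds---lives inside~\cite{dobrev_dynamic_2004}; the corollary itself is essentially a translation theorem from Broadcast to Consensus, so there is no new obstacle to overcome on the Consensus side.
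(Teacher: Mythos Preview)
Your proposal is correct and matches the paper's own approach: the corollary is stated immediately after the connectivity corollary and the round-tight equivalence proposition, and the paper simply says ``Based on the results in \cite{dobrev_dynamic_2004}, we get the following bounds for Consensus,'' which is precisely the stitching-together you describe. The only minor difference is that you spell out the convexity check explicitly, whereas the paper has already remarked earlier that any omission-bounded scheme is convex.
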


The following example shows that there are mobile schemes that are
broadcastable but for
which Consensus is solvable is fewer rounds than Broadcast.
\begin{example}\label{contrex-comp}
  Let $R=\{H_1,H_2\}$ where $H_1$ (resp.\ $H_2$) is given by Fig.~1
  (resp.\ Fig.~2). One can see that $d$ needs two rounds to broadcast
  in $H_1$, and $c$ needs two rounds to broadcast in
  $H_2$. Nodes $a$ and $b$ need more than two rounds. However there is
  a Consensus algorithm that finishes in one round. Notice that every
  node can detect which of the communication events actually happened,
  so the Consensus algorithm in which every node decides
  the value from $c$ if $H_1$ happened and the value from $d$ if $H_2$
  happened uses only one round. 
\end{example}

\newcommand{\vertex}[3][]{\node[draw,circle] (#2) at (#3) {#1}}
\newcommand{\arc}[2]{\draw[->,thick] (#1) -- (#2)}
\begin{figure}[t]
  \begin{minipage}[c]{0.49\linewidth}
    \centering
    \begin{tikzpicture}
      \vertex[a]{A}{0,0}; %
      \vertex[b]{B}{0:2cm}; %
      \vertex[c]{C}{40:1.35cm}; %
      \vertex[d]{D}{-40:1.35cm}; %

      \arc{C}{A};
      \arc{C}{B};
      \arc{D}{A};
      \arc{D}{B};
      
      \arc{A}{B};
      \arc{C}{D};
      \draw[->,thick] (B) to[bend right=45] (C);
    \end{tikzpicture}
    \caption{ Event $H_1$.}
  \end{minipage} \hfill
  \begin{minipage}[c]{0.49\linewidth}
    \centering
    \begin{tikzpicture}
      \vertex[a]{A}{0,0}; %
      \vertex[b]{B}{0:2cm}; %
      \vertex[c]{C}{40:1.35cm}; %
      \vertex[d]{D}{-40:1.35cm}; %

      \arc{C}{A};
      \arc{C}{B};
      \arc{D}{A};
      \arc{D}{B};
      
      \arc{B}{A};
      \arc{D}{C};
      \draw[->,thick] (A) to[bend right=45] (D);
    \end{tikzpicture}
    \caption{ Event $H_2$.}
  \end{minipage}
  \label{quickerconsensus}
\end{figure}  

\section{Conclusions}
We have presented a new necessary condition for solving Consensus on
communication networks subject to arbitrary mobile omission faults. 
We conjecture that this condition is actually sufficient, therefore leading
to a complete characterization of the solvability of Consensus in
environments with arbitrary mobile omissions. 
For a large class of environments that includes any environment
defined by bounding the number of omissions during any round, for
\emph{any} way of counting omissions, we proved that the Consensus
problem is actually equivalent to the Broadcast problem. We also gave
examples (Ex.~\ref{contrex} and Ex.~\ref{contrex-comp}) showing how
Consensus can differ from Broadcast for some environments.

Finally, by factoring out the broadcastability properties required to
solve Consensus, we think that it is possible to extend this work to
other kinds
of failures, such as \emph{byzantine} communication faults.

\end{document}